\newtheorem{theorem}{Theorem}[section]
\theoremstyle{definition}
\theoremstyle{remark}
\numberwithin{equation}{section}
\begin{document}
\bibliographystyle{plain}
\title{Lower bound for the unique games problem}
\author{Rajeev Kohli}
\author{Ramesh Krishnamurti}

\date{July 28, 2015.\\  {\it Address}: R. Kohli, Graduate School of Business, Columbia University, New York, NY (rk35@columbia.edu) and R. Krishnamurti, School of Computing Science, Simon Fraser University, Burnaby, British Columbia, Canada (ramesh@cs.sfu.ca).}

\maketitle
\begin{abstract}
We consider a randomized algorithm for the unique games problem, using independent multinomial probabilities to assign labels to the vertices of a graph. The expected value of the solution obtained by the algorithm is expressed as a function of the probabilities. Finding probabilities that maximize this  expected value is shown to be equivalent to  obtaining an optimal solution to the unique games problem. We attain an upper bound on the optimal solution value by solving a semidefinite programming relaxation of the problem in polynomial time. We use a different but related formulation to show that this upper bound is no greater than $\pi/2$ times the value of the optimal solution to the unique games problem. 

\smallskip\noindent
Key words: Unique games, combinatorial algorithms, analysis of algorithms, randomized algorithms, semidefinite programming.
\end{abstract}


\section{Introduction}
Khot's \cite{KH2} unique games conjecture is an important open question in the area of computational complexity. It says that for certain constraint satisfaction problems, called unique games, it is NP-hard to distinguish between instances that are almost satisfiable and almost completely unsatisfiable.  Khot and Vishnoi \cite{KV} discussed how the conjecture has led to connections between computational complexity, algorithms, analysis and geometry. Raghavendra \cite{RAG} showed that, if the conjecture is true, every constraint satisfaction problem has an associated sharp approximation threshold. For background to the problem and the related literature, see Trevisan \cite{TRE}.

Arora et al. \cite{ABS} observed that the unique games conjecture is one of the few open questions that could go either way. It would not be true if a polynomial time procedure obtained a non-trivial lower bound on the optimal solution value for the unique games problem. Polynomial time algorithms obtaining such bounds have been developed for restricted families of the problem, including those whose constraint graphs have expansion properties, are random graphs, or are random geometric graphs  (\cite{AKKSTV}, \cite{BHHS}, \cite{KOL}). For arbitrary problem instances, nontrivial lower bounds can be obtained in subexponential time by using randomized algorithms due to Arora et al.  \cite{ABS} and Boaz et al \cite{BRS}. But the performance of polynomial time algorithms for the general problem deteriorates as the number of  labels increases (e.g., \cite{CMM1}, \cite{CMM2}, \cite{GT}, \cite{KH2}, \cite{TRE1}). 

We describe a polynomial time procedure that does not find a solution but obtains a non-trivial lower bound on the optimal solution value of the unique games problem. The bound is obtained in two steps. The first step develops a continuous formulation of the unique games problem,  then solves its semidefinite programming relaxation in polynomial time to attain an upper bound on the optimal solution value. The second step uses a geometric representation of the continuous problem to show that its optimal solution value is no smaller than $2/\pi$ times the value of this upper bound. The formulation used for the semidefinite programming relaxation generalizes Goemans and Williamson's \cite{GW} representation of the maxcut problem, and is different from previous formulations for the unique games problem (e.g., \cite{KH2}, \cite{CMM1}, \cite{GT}, \cite{GS}, \cite{TRE1}). It is obtained as follows.

Consider a randomized algorithm that assigns labels to vertices using independent multinomial probability distributions. The expected value of its solution is a function of the probabilities with which it assigns labels to vertices. We show that the problem of maximizing this  expected value over the probabilities is equivalent to finding an optimal solution to the unique games problem. We use the probabilistic representation to obtain two different formulations of the problem. Both formulations associate a vector in a unit sphere with a label for a vertex. The first formulation uses the cosines of the angles between vectors, and the second the angles themselves, to  characterize the probabilities associated with the randomized algorithm. An upper bound on the optimal solution value is obtained by solving a semidefinite programming relaxation of the first formulation in polynomial time. The second formulation  is used to show that this upper bound is no greater than $\pi/2$ times the optimal solution value of the unique games problem. 

{\it Organization of the paper}. Section 2 describes the unique games problem, formulates the problem of maximizing the expected value of the randomized algorithm, shows that it is an extension of the unique games problem over a probability space, and discusses its relation with Goemans and Williamson's formulation for the maxcut problem. Section 3 describes a vector representation of the problem and obtains the semidefinite programming relaxation. Section 4 develops the alternative geometric formulation, examines  its relation with the  semidefinite programming relaxation, and obtains a lower bound on the optimal solution value for the unique games problem. 

\section{Unique games problem}
Let $G(V, E)$ denote a graph with $|V|=n$ vertices and $|E|=m$ edges. Each vertex can be assigned one of $k\ge 2$ labels, denoted $r=1, \dots, k$. Each edge $(i, j)\in E$ has weight $w_{ij}>0$. We say that edge $(i, j)\in E$ is matched (equivalently, its vertices are matched) if vertex $i$ is assigned label $r$ and vertex $j$ is assigned label $\sigma_{ij}(r)$, where $r=1, \dots, k$ and $\sigma_{ij}(r)\ne \sigma_{ij}(t)$ if $r\ne t$. Thus, for each edge $(i, j)\in E$, the elements of the ordered vector $\sigma_{ij}=(\sigma_{ij}(1), \dots, \sigma_{ij}(k))$ are the integers  $1, \dots, k$. The $r$th element $\sigma_{ij}(r)$ corresponds to the label for vertex $j$ that matches label $r$ for vertex $i$.  The objective of the unique games problem is to find an assignment of labels to all vertices that maximizes the sum of the weights $w_{ij}$ across matched edges.

\subsection{Formulation}
Consider a randomized algorithm that assigns label $r$ to vertex $i\in V$ with a multinomial probability  $p_{ir}$, where $p_{i1}+\dots + p_{ik}=1$, for all $i\in V$. The probabilities $p_{ir}$ can differ across both the labels and the vertices for a problem instance. We consider the problem of finding the probability values that maximize the expected value of the solution obtained by the randomized algorithm. We show that solving this problem is equivalent to finding an optimal solution for the associated unique games problem. 

The randomized algorithm matches (the vertices $i$ and $j$ of) an edge $(i, j)\in E$ with the probability
$$\rho_{ij}=\sum_{r=1}^k p_{ir}p_{j\sigma_{ij}(r)},\ \text{for\ all}\ (i, j)\in E.$$
Thus, the solution it obtains has the expected value
$$E[z]=\sum_{(i, j)\in E} w_{ij}\rho_{ij}=\sum_{(i, j)\in E} \sum_{r=1}^k w_{ij} p_{ir}p_{j\sigma_{ij}(r)}.$$
We consider the following problem, denoted P, in which the decision variables are the probabilities $p_{ir}$ and the objective function maximizes the value of $E[z]$.

$${\rm (P)}\ \ {\rm Maximize}\ \ E[z]=\sum_{(i, j)\in E} \sum_{r=1}^k w_{ij} p_{ir}p_{j\sigma_{ij}(r)}
\qquad\qquad\qquad\qquad\ \ $$
$${\rm subject\ to}\ \ \sum_{r=1}^k p_{ir}=1,\ \text{for\ all}\ i\in V,\qquad\qquad\qquad\quad\ \ $$
$$\qquad \quad\ \ \  0\le p_{ir}\le 1,\ \text{for\ all}\ r=1, \dots, k,\ i\in V.$$
Since P is a maximization problem, the constraint on the sum of the probabilities can be written as an inequality: $p_{i1}+\dots +p_{ik}\le 1$, for all $i\in V$. We will use this representation for proving Theorem \ref{thm2}. 

Let
$$p_{ir}={1\over 2}(1+y_{ir}),\ {\rm where}\ -1\le y_{ir}\le 1.$$ 
Then $y_{ir}/2$ is the deviation of $p_{ir}$ from  $1/2$, and has a value between $-1/2$ and $1/2$.  (Poljak et al. \cite{PRW} used the same method to convert optimization problems with 0-1 decision variables into those with $\pm 1$ values.) Thus
$$\rho_{ij}={1\over 4}\sum_{r=1}^k (1+y_{ir}+y_{j\sigma_{ij}(r)}+y_{ir}y_{j\sigma_{ij}(r)}),\ {\rm for\ all}\ (i, j)\in E.$$
The constraint  
$$\sum_{r=1}^k p_{ir}=\sum_{r=1}^k {1\over 2}(1+y_{ir})=1$$
becomes
$$\sum_{r=1}^k y_{ir}=2-k,\ \text{for\ all}\ i\in V.$$
Thus, the following problem, denoted P1,  is equivalent to problem P.

$${\rm (P1)}\ \ {\rm Maximize}\ \ E[z]={1\over 4}\sum_{(i, j)\in E} \sum_{r=1}^k w_{ij} (1+y_{ir}+y_{j\sigma_{ij}(r)}+y_{ir}y_{j\sigma_{ij}(r)})$$
$${\rm subject\ to}\ \ \sum_{r=1}^k y_{ir}=2-k,\ \ {\rm for\ all}\ i\in V, \qquad\qquad\qquad\quad\ $$
$$\qquad\quad -1\le y_{ir}\le 1,\ {\rm for\ all}\ r=1, \dots, k,\  i\in V.$$

Let $z$ denote the value of a feasible solution, and $z^*$ the value of the optimal solution, to a unique games problem.

\begin{theorem}\label{thm1}
Problem P1 is a continuous extension of the unique games problem. Its optimal solution (1)  is obtained when $y_{ir}\in \{-1, 1\}$, and (2) has the same value $z^*$ as the optimal solution to the unique games problem.
\end{theorem}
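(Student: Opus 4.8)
The plan is to prove both claims by exploiting the fact that the objective of P1 is \emph{multilinear}: after fixing, for a given vertex $i$, every variable $y_{jr}$ with $j\ne i$, the objective is an affine function of the block $(y_{i1},\dots,y_{ik})$. This holds because every quadratic term has the form $y_{ir}y_{j\sigma_{ij}(r)}$ with $(i,j)\in E$, so its two factors come from \emph{distinct} vertices (there are no self-loops); consequently no product of two block-$i$ variables ever appears, and each term is linear in block $i$. The same structure holds for problem P in the original probabilities $p_{ir}$, and the substitution $p_{ir}=\tfrac12(1+y_{ir})$ maps the probability simplex $\{\sum_r p_{ir}=1,\ p_{ir}\ge0\}$ bijectively onto the feasible block $\{\sum_r y_{ir}=2-k,\ -1\le y_{ir}\le1\}$, carrying its vertices (label $r$ assigned to $i$) to the points with $y_{ir}=1$ and $y_{ir'}=-1$ for $r'\ne r$. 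This correspondence between integral points and label assignments is exactly what makes P1 a continuous extension of the unique games problem.

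For claim (1) I would use a block-coordinate rounding argument. Start from any optimal solution $y^*$ of P1 and process the vertices $i=1,\dots,n$ in turn: holding all other blocks fixed at their current values, the objective is affine in block $i$ over the feasible set $\{\sum_r y_{ir}=2-k,\ -1\le y_{ir}\le1\}$, so its maximum over that set is attained at a vertex, i.e.\ at a point with $y_{ir}\in\{-1,1\}$. Replacing block $i$ by such a vertex does not decrease the objective and does not alter any other block; in particular, blocks already made integral stay integral. After all $n$ blocks have been processed we obtain a feasible point with every $y_{ir}\in\{-1,1\}$ whose objective value is at least that of $y^*$, hence, by optimality of $y^*$, equal to it. Thus an optimal solution is attained with $y_{ir}\in\{-1,1\}$. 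Feasibility then forces exactly one $+1$ per block, since if $t$ of the $k$ entries equal $+1$ then $\sum_r y_{ir}=2t-k=2-k$ gives $t=1$.

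For claim (2) I would evaluate the objective at such an integral optimum. In the $p$-variables this means $p_{ir}\in\{0,1\}$ with exactly one $p_{ir}=1$ per vertex; define $\ell(i)$ to be the unique label with $p_{i\ell(i)}=1$. For each edge $(i,j)$ the inner sum $\sum_r w_{ij}p_{ir}p_{j\sigma_{ij}(r)}$ collapses to the single term $r=\ell(i)$, which equals $w_{ij}$ precisely when $\ell(j)=\sigma_{ij}(\ell(i))$, i.e.\ when the edge is matched, and $0$ otherwise. Hence $E[z]=\sum_{\text{matched }(i,j)}w_{ij}$ is exactly the unique games objective of the assignment $\ell$, and conversely every assignment arises this way. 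Maximizing $E[z]$ over integral points is therefore the unique games problem itself, so the common optimal value is $z^*$.

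The genuinely load-bearing step is the multilinearity observation together with the verification that it survives in both the $p$- and $y$-formulations; once that is in hand, the rounding and the evaluation at an integral point are routine. The one point to state carefully is that processing the blocks sequentially preserves the integrality already achieved, which is immediate because each update touches only the block being optimized.
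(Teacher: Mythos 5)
Your proof is correct, but it takes a genuinely different route from the paper's. The paper's own argument is probabilistic and much shorter: since the constraint $\sum_r p_{ir}=1$ makes the $p_{ir}$ genuine multinomial probabilities, $E[z]$ is literally an expectation over the induced distribution on label assignments, each of which has value at most $z^*$, giving $\max E[z]\le z^*$; conversely every assignment is realized by an integral feasible point ($y_{ir_i}=1$, $y_{ir}=-1$ otherwise), giving $\max E[z]\ge z^*$, and both claims of the theorem follow at once. You replace the one-line averaging step by a deterministic argument: multilinearity of the objective (each product $y_{ir}y_{j\sigma_{ij}(r)}$ mixes distinct blocks because edges join distinct vertices), plus the fact that an affine function on the polytope $\{\sum_r y_{ir}=2-k,\ -1\le y_{ir}\le 1\}$ attains its maximum at a vertex; you correctly identify the vertices as exactly the points with one entry $+1$ and $k-1$ entries $-1$, and your sequential block rounding is sound since each step is feasibility-preserving and touches only one block. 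In effect your rounding is the method of conditional expectations applied to the paper's randomized algorithm, and it buys something the paper's proof does not: a constructive, polynomial-time procedure turning any fractional feasible point into an integral point of no smaller value. It also makes explicit a hypothesis the paper leaves implicit, namely the absence of self-loops --- though note the paper's averaging argument quietly needs the same assumption, since $\rho_{ij}=\sum_r p_{ir}p_{j\sigma_{ij}(r)}$ is the matching probability only when the labels of $i$ and $j$ are independent, i.e.\ $i\ne j$. What the paper's route buys in exchange is brevity and a direct appeal to the probabilistic interpretation on which the rest of the paper is built. Both proofs establish, as the theorem intends, that \emph{some} optimal solution is integral with value $z^*$, not that every optimal solution is.
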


\begin{proof}
To show that the problem of maximizing $E[z]$ is a continuous extension, it is sufficient to observe that (1) it is well-defined for all values of $-1\le y_{ir}\le 1$ (that is, $0\le p_{ir}\le 1$), where $r=1, \dots, k$ and $i\in V$; and (2) any  feasible solution to the unique games problem in which each vertex $i\in V$ is assigned label $r_i$ is also obtained by the randomized algorithm when $y_{ir_i}=1$ (that is, $p_{ir_i}=1$)  and $y_{ir}=-1$ (that is, $p_{ir}=0$)  for each $r\ne r_i$, $r=1, \dots, k$. It follows that 
$\max E[z]\ge z^*$, because the optimal solution to the unique games problem is a feasible solution to the problem of maximizing $E[z]$. On the other hand, $\max E[z]\le z^*$, because $E[z]$ is an expected value computed over the set of feasible solutions to the unique games problem, none of which can exceed the value $z^*$. Thus, $\max E[z]=z^*$.

\end{proof}

Restricting the probabilities $p_{ir}$ to 0-1 values in problem P gives a discrete formulation of  the unique games problem, edge $(i, j)$ being matched with probability $\rho_{ij}\in \{0, 1\}$ for all $(i, j)\in E$. This is equivalent to restricting the $y_{ir}$ variables to $\pm 1$ values in problem P1. We interpret the latter formulation and observe  that Goemans and Williamson's \cite{GW} formulation for the maxcut problem is its special case.

First, note that the constraint $y_{i1}+\dots + y_{ik}=2-k$
in problem P1 implies that
$$\rho_{ij}={1\over 4}\big(4-k+\sum_{r=1}^k y_{ir}y_{j\sigma_{ij}(r)}\big).$$
Since $0\le \rho_{ij}\le 1$, 
$$k-4\le \sum_{r=1}^k y_{ir}y_{j\sigma_{ij}(r)}\le k.$$
Next, suppose $y_{ir}\in \{-1, 1\}$, for all $r=1, \dots, k$ and $i\in V$.  Then the constraint
$y_{i1}+\dots + y_{ik}=2-k$ is  satisfied only if $y_{ir}=1$ for the label that is assigned to vertex $i$, and $y_{ir}=-1$ for the other $k-1$ labels. Suppose that edge $(i, j)$ is matched, that vertex $i$ is assigned label $s$, and that vertex $j$ is assigned the matching label $\sigma_{ij}(s)$. In this case, 
$$y_{is}y_{j\sigma_{ij}(s)}=(1)(1)=1,$$
and 
$$y_{ir}y_{j\sigma_{ij}(r)}=(-1)(-1)=1,\ {\rm for\ all}\ r\ne s.$$ 
Thus,
$$\sum_{r=1}^k y_{ir}  y_{j\sigma_{ij}(r)}=1(1)+(k-1)(1)=k,$$ 
and
$$\rho_{ij}={1\over 4}\big(4-k+\sum_{r=1}^k y_{ir}  y_{j\sigma_{ij}(r)}\big)=1.$$ 
Now suppose that edge $(i, j)$ is not matched, that vertex $i$ is assigned label $s$, and that vertex $j$ is assigned label $\sigma_{ij}(t)$, where $s\ne t$. In this case, 
$$y_{is}y_{j\sigma_{ij}(s)}=(1)(-1)=-1,$$
$$y_{it}y_{j\sigma_{ij}(t)}=(-1)(1)=-1,$$  
and
$$y_{ir}y_{j\sigma_{ij}(r)}=(-1)(-1)=1,\ {\rm for\ all}\ r\ne s, t.$$
Thus,
$$\sum_{r=1}^k y_{ir}  y_{j\sigma_{ij}(r)}=(-1)+(-1)+(k-2)(1)=k-4,$$ 
and
$$\rho_{ij}={1\over 4}\big(4-k+\sum_{r=1}^k y_{ir}  y_{j\sigma_{ij}(r)}\big)=0.$$ 
We conclude that if $y_{ir}\in \{-1, 1\}$, then $\rho_{ij}\in \{0, 1\}$, and $\sum_{i, j)\in E} w_{ij}\rho_{ij}$ is the value of a feasible solution for the unique games problem.

\subsection{Maxcut problem} Consider problem P1 for $k=2$, $y_{j\sigma_{ij}(1)}=2$ and $y_{j\sigma_{ij}(2)}=1$. Then 
$$\rho_{ij}={1\over 4}(4-k+y_{i1}y_{j\sigma_{ij}(1)}+y_{i2}y_{j\sigma_{ij}(2)})$$ 
$$={1\over 4}(2+y_{i1}y_{j2}+y_{i2}y_{j1}).\qquad\quad$$ 
Problem P1 becomes

$${\rm Maximize}\ \ E[z]={1\over 4}\sum_{(i, j)\in E} w_{ij}(2+y_{i1}y_{j2}+y_{i2}y_{j1})$$
$${\rm subject\ to}\ \ y_{i1}+y_{i2}=0,\ \text{for\ all}\ i\in V_1,\qquad\qquad\ $$
$$\qquad\qquad \quad\ \ \  -1\le y_{i1}, y_{i2}\le 1,\ \text{for\ all}\ r=1, 2,\ i\in V.$$

We eliminate the constraint $y_{i1}+y_{i2}=0$ by substituting  $y_{i2}=-y_{i1}$ and $y_{j2}=-y_{j1}$ into the objective function to obtain the following representation:

$${\rm Maximize}\ \ E[z]={1\over 2}\sum_{(i, j)\in E} w_{ij}(1-y_{i1}y_{j1})$$
$${\rm subject\ to}\ \ -1\le y_{i1}\le 1,\ \text{for\ all}\ i\in V.\  $$

This is the formulation described by Goemans and Williamson \cite{GW} for the maxcut problem. The only difference, which is inconsequential after Theorem \ref{thm1}, is that the present formulation maximizes the expected value of a randomized algorithm and allows each $y_{i1}$ variable to obtain any value between $-1$ and $1$.

\section{Vector representation and relaxation}
 Let $S_{(k+1)n}$ denote a unit sphere in $(k+1)n$ dimensions. Let $v_{ir}$ denote a unit vector in $S_{(k+1)n}$,  for each $r=0, \dots, k$ and $i\in V$.  We associate the vector $v_{ir}$  with label $r$ for vertex $i$, for each $r=1, \dots, k$ and $i\in V$. The vectors $v_{i0}$ are not associated with labels, but are used as follows to define the probabilities with which the randomized algorithm assigns  labels to the vertices.
 
Let 
$$y_{ir}=v_{i0}\cdot v_{ir},\ {\rm for\ all}\ r=1, \dots, k,\ i\in V.$$ 
Then the probability that vertex $i$ is assigned label $r$ is given by
$$p_{ir}={1\over 2}(1+v_{i0}\cdot v_{ir}),\ {\rm for\ all}\ r=1, \dots, k,\  i\in V.$$
Equivalently,  
$$v_{i0}\cdot v_{ir}=2p_{ir}-1,\ {\rm for\ all}\ r=1, \dots, k,\ i\in V.$$
Thus, $v_{i0}\cdot v_{ir}=-1$ when $p_{ir}=0$, and $v_{i0}\cdot v_{ir}=1$ when $p_{ir}=1$: the vectors $v_{i0}$ and $v_{ir}$ lie in opposite directions when $p_{ir}=0$, and in the same direction when $p_{ir}=1$. The constraint $p_{i1}+\dots + p_{ik}=1$, which was represented in problem P1 as $y_{i1}+\dots +y_{ik}=2-k$, becomes
$$\sum_{r=1}^k v_{i0}\cdot v_{ir}=2-k,\ \ {\rm for\ all}\ i\in V.$$

Let
$$y_{ir}y_{j\sigma_{ij}(r)}=v_{ir}\cdot v_{j\sigma_{ij}(r)},\ {\rm for\ all}\ r=1, \dots, k,\ i, j\in V.$$ 
Since $y_{ir}y_{j\sigma_{ij}(r)}=(2p_{ir}-1)(2p_{j\sigma_{ij}(r)}-1)$, we have
$$v_{ir}\cdot v_{j\sigma_{ij}(r)}=4\Big(p_{ir}-{1\over 2}\Big)\Big(p_{j\sigma_{ij}(r)}-{1\over 2}\Big).$$
Thus for any label $r=1, \dots, k$:
\begin{itemize}
\item[(1)] $v_{ir}\cdot v_{j\sigma_{ij}(r)}=1$ when (i) $p_{ir}=p_{j\sigma_{ij}(r)}=1$ (that is, $y_{ir}=y_{j\sigma_{ij}(r)}=1$); or (ii) $p_{ir}=p_{j\sigma_{ij}(r)}=0$ (that is, $y_{ir}=y_{j\sigma_{ij}(r)}=-1$). The vectors $v_{ir}$ and $v_{j\sigma_{ij}(r)}$ lie in the same direction when (i) vertices $i$ and $j$ are assigned labels $r$ and $\sigma_{ij}(r)$, respectively, or (ii) when both vertices  are not assigned these labels.
\item[(2)] $v_{ir}\cdot v_{j\sigma_{ij}(r)}=-1$ when $p_{ir}=1$, $p_{j\sigma_{ij}(r)}=0$ (that is, $y_{ir}=1$, $y_{j\sigma_{ij}(r)}=-1$);  or $p_{ir}=0$, $p_{j\sigma_{ij}(r)}=1$ (that is, $y_{ir}=-1$, $y_{j\sigma_{ij}(r)}=1$). The vectors $v_{ir}$ and $v_{j\sigma_{ij}(r)}$ lie in opposite directions  when one, but not both, of the vertices $i$ and $j$ are assigned labels $r$ and $\sigma_{ij}(r)$, respectively. 
\end{itemize}
Since $v_{i0}\cdot v_{i0}=1$, we can express Problem P1 in the  form of problem P2 below.

$${\rm (P2)}\ \ {\rm Maximize}\ \ E[z]={1\over 4}\sum_{(i, j)\in E} \sum_{r=1}^k w_{ij}  (v_{i0}\cdot v_{i0}+v_{i0}\cdot v_{ir}+v_{j0}\cdot v_{j\sigma_{ij}(r)}+v_{ir}\cdot v_{j\sigma_{ij}(r)})$$
$${\rm subject\ to}\ \ \sum_{r=1}^k v_{i0}\cdot v_{ir}\le 2-k,\ \ {\rm for\ all}\ i\in V,\qquad\qquad\qquad\qquad\qquad\qquad\ $$
$$\ \ \ v_{ir}\cdot v_{j\sigma_{ij}(r)}=y_{ir}y_{j\sigma_{ij}(r)},\ {\rm for\ all}\ r=1, \dots, k,\  i\in V,$$
$$\qquad\ \  -1\le y_{ir}\le 1, v_{ir}\in S_{(k+1)n},\ {\rm for\ all}\ r=0, \dots, k,\  i\in V.$$

Observe that we have relaxed the equality constraints on the sum of the probabilities in problem P1 to inequality constraints on the sum of $v_{i0}\cdot v_{ir}$ values in problem P2. Since P2 is a maximization problem, these constraints are tight in the optimal solution. From Theorem \ref{thm1}, the optimal solution to problem P2 is obtained when all the vectors $v_{ir}$ lie in a 1-dimensional space. 
 
Relaxing the constraint $v_{ir}\cdot v_{j\sigma_{ij}(r)}=y_{ir}y_{j\sigma_{ij}(r)}$ in problem P2 gives the following vector program P3, which can be solved in polynomial time. 

$${\rm (P3)}\ \ {\rm Maximize}\ \ z_1={1\over 4}\sum_{(i, j)\in E} \sum_{r=1}^k w_{ij} (v_{i0}\cdot v_{i0}+v_{i0}\cdot v_{ir}+v_{j0}\cdot v_{j\sigma_{ij}(r)}+v_{ir}\cdot v_{j\sigma_{ij}(r)})$$
$${\rm subject\ to}\ \ \sum_{r=1}^k v_{i0}\cdot v_{ir}\le 2-k,\ \ {\rm for\ all}\ i\in V,\qquad\qquad\qquad\quad$$
$$\ \ \ v_{ir}\in S_{(k+1)n},\ {\rm for\ all}\ r=0, \dots, k,\  i\in V.$$
Let $z_1^*$ denote the optimal solution value for problem P3. 

\section{Geometric formulation and a lower bound}
To obtain the desired bound on the optimal solution value of the unique games problem, we obtain another formulation of problem P. This new formulation is closely related to the preceding semidefinite programming relaxation. By construction, the lower bound on the optimal solution value for this new formulation is no smaller than $2/\pi$ times the value of the optimal solution to problem P3. The key difference between the formulation of problem P2 and the following formulation is that while the former represents the $y_{ir}$ variables by the cosines of angles between vectors, the latter represents them by the angles themselves. 

Again, consider a unit sphere $S_{(k+1)n}$, and unit vectors $v_{ir}$ representing labels $r=1, \dots, k$,  for each vertex $i\in V$. As in the preceding formulation, $v_{i0}$ denotes an additional unit vector for each $i\in V$. Let 
$$p_{ir}={1\over 2}(1+y_{ir})=1-{1\over \pi}\arccos(v_{i0}\cdot v_{ir}).$$
Then
$$y_{ir}=2\Big(1-{1\over \pi}\arccos(v_{i0}\cdot v_{ir})\Big)-1.$$
Using the relation $\arcsin (x) + \arccos (x)=\pi/2$ gives
$$y_{ir}={2\over \pi}\arcsin (v_{i0}\cdot v_{ir}),\ {\rm for\ all}\ r=1, \dots, k,\  i\in V.$$

\begin{figure}[!ht]
\caption{Geometric representation of $y_{ir}={2\over \pi}\theta$, where $\theta=\arcsin(v_{i0}\cdot v_{ir})$} 
\centering
\includegraphics[width=0.55\textwidth]{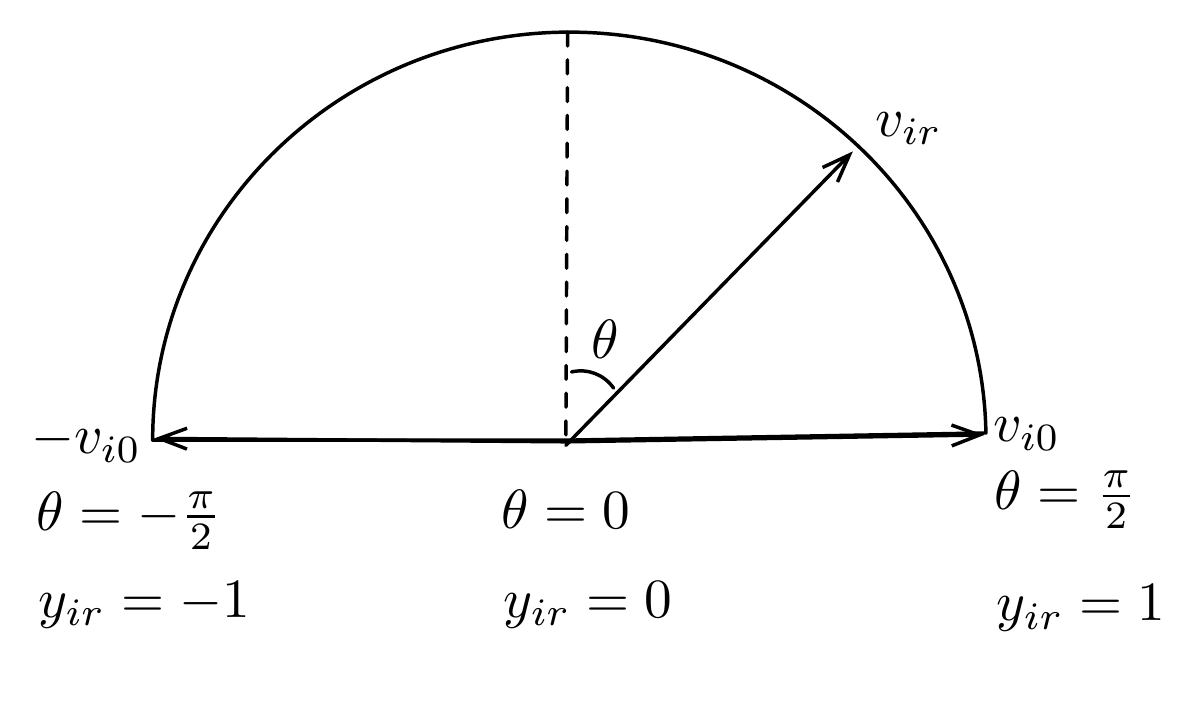}
\end{figure}

Figure 1 shows the relation between $\theta=\arcsin(v_{i0}\cdot v_{ir})$ and $y_{ir}=2p_{ir}-1$. We observe that for any label $r=1, \dots, k$:
\begin{itemize}
\item[(1)] $\theta=\pi/2$ when $y_{ir}=1$: vectors $v_{i0}$ and $v_{ir}$ lie in the same direction when vertex $i$ is assigned label $r$ with probability $p_{ir}=1$.
\item[(2)] $\theta=-\pi/2$ when $y_{ir}=-1$: vectors $v_{i0}$ and $v_{ir}$ lie in opposite directions when vertex $i$ is assigned label $r$ with probability $p_{ir}=0$.
\end{itemize}
The constraint that each vertex is assigned a label with probability one becomes
$$\sum_{r=1}^k {2\over \pi} \arcsin (v_{i0}\cdot v_{ir})=2-k,\ \ {\rm for\ all}\ i\in V.$$

The probability that vertex $i$ is assigned label $r$ but vertex $j$ is not assigned label $\sigma_{ij}(r)$ is given by $p_{ir}(1-p_{j\sigma_{ij}(r)})$. Similarly, the probability that vertex $j$ is assigned label $\sigma_{ij}(r)$ but vertex $i$ is not assigned label $r$ is given by $(1-p_{ir})p_{j\sigma_{ij}(r)}$. Thus, the expression
$$p_{ir, j\sigma_{ij}(r)}=p_{ir}(1-p_{j\sigma_{ij}(r)})+(1-p_{ir})p_{j\sigma_{ij}(r)},\ {\rm for\ all}\ r=1, \dots, k,$$
gives the probability that edge $(i, j)$ is not matched because one, but not both, of vertices $i$ and $j$ are assigned label $r$ and  label $\sigma_{ij}(r)$, respectively. In this case, we say that edge $(i, j)$ is not matched via label $r$. Let $p_{ir, j\sigma_{ij}(r)}$ be proportional to the angle between vectors $v_{ir}$ and $v_{j\sigma_{ij}(r)}$, for each $r=1, \dots, k$:
$$p_{ir, j\sigma_{ij}(r)}=p_{ir}(1-p_{j\sigma_{ij}(r)})+(1-p_{ir})p_{j\sigma_{ij}(r)}={1\over \pi}{\arccos(v_{ir}\cdot v_{j\sigma_{ij}(r)})}.$$ 
Multiplying both sides of the preceding expression by $-2$ gives
$$4p_{ir}p_{j\sigma_{ij}(r)}-2p_{ir}-2p_{j\sigma_{ij}(r)}=-{2\over \pi}\arccos(v_{ir}\cdot v_{j\sigma_{ij}(r)}).$$ 
Adding $1$ to both sides of this expression gives
$$4p_{ir}p_{j\sigma_{ij}(r)}-2p_{ir}-2p_{j\sigma_{ij}(r)}+1=(2p_{ir}-1)(2p_{j\sigma_{ij}(r)}-1)=2\big(1-{1\over \pi}\arccos(v_{ir}\cdot v_{j\sigma_{ij}(r)})\big)-1.$$ 
Since $p_{ir}=(1+y_{ir})/2$, we substitute $2p_{ir}-1=y_{ir}$ to obtain
$$y_{ir}y_{j\sigma_{ij}(r)}=2\big(1-{1\over \pi}\arccos(v_{ir}\cdot v_{j\sigma_{ij}(r)})\big)-1\qquad\qquad\qquad\qquad\qquad\qquad\ $$
$$={2\over \pi}\arcsin (v_{ir}\cdot v_{j\sigma_{ij}(r)}),\ {\rm for\ all}\ r=1, \dots, k,\ i, j\in V.$$

\begin{figure}[!ht]
\caption{Geometric representation of $y_{ir}y_{j\sigma_{ij}(r)}={2\over \pi}\theta$, where $\theta=\arcsin(v_{ir}\cdot v_{j\sigma_{ij}(r)})$} 
\centering
\includegraphics[width=0.67\textwidth]{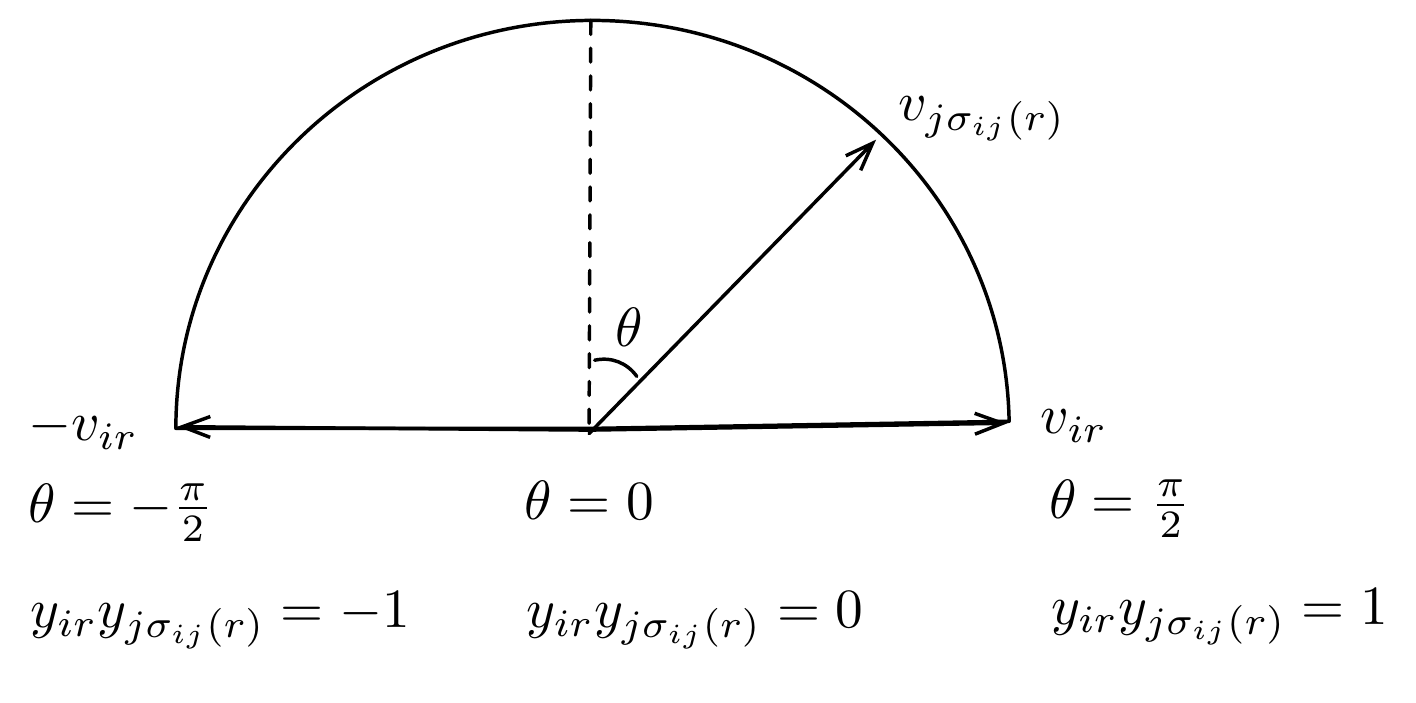}
\end{figure}

Figure 2 shows the relation between $\theta=\arcsin(v_{ir}\cdot v_{j\sigma_{ij}(r)})$ and $y_{ir}y_{j\sigma_{ij}(r)}=(2p_{ir}-1)(2p_{j\sigma_{ij}(r)}-1)$.
We observe that for any label $r=1, \dots, k$:  
\begin{itemize}
\item[(1)] $\theta=\pi/2$ when $y_{ir}=y_{j\sigma_{ij}(r)}=1$ or $y_{ir}=y_{j\sigma_{ij}(r)}=-1$. That is, vectors $v_{ir}$ and $v_{j\sigma_{ij}(r)}$ lie in the same direction when (i) vertices $i$ and $j$ are  matched using label $r$, or (ii) vertex $i$ is not assigned label $r$ and vertex $j$ is not assigned label $\sigma_{ij}(r)$.
\item[(2)] $\theta=-\pi/2$ when $y_{ir}y_{j\sigma_{ij}(r)}=-1$. That is, vectors  $v_{ir}$ and $v_{j\sigma_{j}(r)}$ lie in opposite directions when edge $(i, j)$ is not matched via  label $r$. 
\end{itemize}

Thus, the probability that edge $(i, j)\in E$ is matched has the value
$$\rho_{ij}={1\over 4}\sum_{r=1}^k \Big\{{2\over \pi}\arcsin (v_{i0}\cdot v_{i0})+{2\over \pi}\arcsin (v_{i0}\cdot v_{ir})+{2\over \pi}\arcsin (v_{j0}\cdot v_{j\sigma_{ij}(r)})+{2\over \pi}\arcsin (v_{ir}\cdot {v_{j\sigma_{ij}(r)})}\Big\},$$
where we have substituted 
$$1={2\over \pi}\arcsin (v_{i0}\cdot v_{i0})$$ 
because $v_{i0}\cdot v_{i0}=1$.  
It follows that problem P2 is equivalent to the following problem, denoted P4.

$${\rm (P4)}\ \ {\rm Maximize}\ \ E[z]={2\over \pi}{1\over 4}\sum_{(i, j)\in E} \sum_{r=1}^k w_{ij} \Big\{\arcsin (v_{i0}\cdot v_{i0})+\arcsin (v_{i0}\cdot v_{ir})$$
$$\qquad\qquad\qquad\qquad\qquad\qquad\qquad\quad\qquad\qquad +\arcsin (v_{j0}\cdot v_{j\sigma_{ij}(r)})+\arcsin (v_{ir}\cdot {v_{j\sigma_{ij}(r)})}\Big\}$$
$${\rm subject\ to}\ \ {2\over \pi}\sum_{r=1}^k \arcsin (v_{i0}\cdot v_{ir})\le 2-k,\ \ {\rm for\ all}\ i\in V,\qquad\ \ $$
$$\quad v_{ir}\in S_{(k+1)n},\ {\rm for\ all}\ r=0, \dots, k,\  i\in V.$$

Theorem \ref{thm1} implies that the optimal solution to problem P4 is characterized by the following two conditions:
\begin{itemize}
\item[(1)] $\arcsin (v_{ir}\cdot v_{j\sigma_{ij}(r)})\in \{-\pi/2, \pi/2\}$, which is equivalent to 
$y_{ir}\in \{-1, 1\}$, for all $r=1, \dots, k$ and $i\in V.$
\item[(2)] $\arcsin(v_{i0}\cdot v_{ir})=\pi/2$ and $\arcsin(v_{i0}\cdot v_{it})=-\pi/2$, for all $t\ne r$,  
 $t=1, \dots, k$ and $i\in V$.  This is equivalent to $y_{ir}=1$ for the label assigned to vertex $i\in V$, and $y_{ir}=-1$ for the remaining $k-1$ labels that are not assigned to vertex $i\in V$.
\end{itemize}
Thus, the optimal solution to problem P4 is obtained when the vectors $v_{ir}$ lie in a one-dimensional space, for all $r=0, \dots, k$ and $i\in V$. 

\newpage
\begin{theorem}\label{thm2}
$z^*=\max E[z]\ge {2\over \pi} z_1^*$.
\end{theorem}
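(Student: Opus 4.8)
The plan is to exploit that P4 is, by Theorem \ref{thm1} and the equivalence of P4 with P2, an exact reformulation of P whose optimum equals $z^*$ and is actually attained; it therefore suffices to prove $\max(\mathrm{P4})\ge\frac{2}{\pi}z_1^*$. I would start from an optimal solution $\{v_{ir}^*\}$ of the semidefinite relaxation P3, of value $z_1^*$, and insert these same vectors into P4. Writing $\phi(x)=\frac{2}{\pi}\arcsin(x)$, the P4 objective at any admissible vectors equals $\frac14\sum_{(i,j)\in E}\sum_{r=1}^k w_{ij}\big(\phi(v_{i0}\cdot v_{i0})+\phi(v_{i0}\cdot v_{ir})+\phi(v_{j0}\cdot v_{j\sigma_{ij}(r)})+\phi(v_{ir}\cdot v_{j\sigma_{ij}(r)})\big)$, while the P3 objective is the identical sum with $\phi$ replaced by the identity. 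Hence the theorem splits into two claims: that $\{v_{ir}^*\}$ remains feasible for P4, and that its P4 value is at least $\frac{2}{\pi}$ times its P3 value $z_1^*$.

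The analytic engine for both claims is the single-variable comparison between $\phi$ and the identity. Because $\arcsin$ is odd and convex on $[0,1]$, its chord gives $\phi(x)\le x$ on $[0,1]$ and, by oddness, $\phi(x)\ge x$ on $[-1,0]$; equivalently $\mathrm{sign}(x)\,(x-\phi(x))\ge 0$. The self inner products contribute favourably and unconditionally: each term $\phi(v_{i0}\cdot v_{i0})=\phi(1)=1$ beats its $\frac{2}{\pi}$-scaled counterpart $\frac{2}{\pi}$ by the fixed surplus $1-\frac{2}{\pi}>0$, giving a guaranteed margin of $k\big(1-\frac{2}{\pi}\big)$ per edge. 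For the linear terms $v_{i0}\cdot v_{ir}$ I would use that the relaxed constraint $\sum_r v_{i0}^*\cdot v_{ir}^*\le 2-k$ is tight at the optimum (raising the sum $\sum_r v_{i0}\cdot v_{ir}$ can only help the objective, whose coefficient on each such product is the positive quantity $\frac14\sum_{j:(i,j)\in E}w_{ij}$), which pins $\sum_r v_{i0}^*\cdot v_{ir}^*=2-k$ and lets me control the aggregate of the $\phi(v_{i0}^*\cdot v_{ir}^*)$ against $2-k$, and likewise for the $v_{j0}\cdot v_{j\sigma_{ij}(r)}$ terms, which are a permutation of the same list.

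I expect the cross terms $v_{ir}^*\cdot v_{j\sigma_{ij}(r)}^*$ to be the main obstacle, and the reason is precisely the direction of the sign inequality above. These products carry no explicit constraint, and wherever one is negative the quantity $\phi(d)-\frac{2}{\pi}d=\frac{2}{\pi}(\arcsin d-d)$ is itself negative, so a naive term-by-term comparison fails: an adversarial configuration with many negative cross products could in principle overwhelm the $k\big(1-\frac{2}{\pi}\big)$ surplus and even threaten P4-feasibility of the constraint. Closing this gap demands a genuinely global argument rather than a pointwise one — either a Nesterov-type inequality asserting that $\arcsin(G)-G$ (the entrywise $\arcsin$ of the Gram matrix $G$ of $\{v_{ir}^*\}$, minus $G$) is positive semidefinite, obtained from the odd power series of $\arcsin$ with nonnegative coefficients together with the Schur product theorem, paired with positive semidefiniteness of the objective's coefficient matrix; or a more delicate use of the optimality of $\{v_{ir}^*\}$ to rule out such adversarial sign patterns. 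The former is the cleaner route but is exactly where I would expect friction, since the unique-games objective is not in general positive semidefinite, so establishing the paired inequality is the crux of the whole proof.
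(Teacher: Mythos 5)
Your proposal is not a proof: it stalls, by your own admission, at the cross terms, and the patch you suggest is one you yourself rule out. But the gap is wider than you describe, because the feasibility-transfer step fails too, by exactly the sign flip you isolate. The P4 constraint is $\frac{2}{\pi}\sum_{r=1}^k \arcsin(v_{i0}\cdot v_{ir})\le 2-k$, and for $k\ge 3$ the right side is negative, so tightness of the P3 constraint forces some $v_{i0}^*\cdot v_{ir}^*<0$; there your own chord inequality, reflected through the origin, gives $\frac{2}{\pi}\arcsin(x)\ge x$, hence $\frac{2}{\pi}\sum_r\arcsin(v_{i0}^*\cdot v_{ir}^*)\ge \sum_r v_{i0}^*\cdot v_{ir}^*=2-k$, with strict inequality unless every $v_{i0}^*\cdot v_{ir}^*$ lies in $\{-1,0,1\}$. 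Concretely, with $k=3$ and $v_{i0}^*\cdot v_{ir}^*=-\frac13$ for each $r$, the left side is about $-0.65>-1$, so the P3 optimum is infeasible for P4. Thus ``tightness lets me control the aggregate'' runs in the wrong direction; and note that the constraint involves only the $v_{i0}\cdot v_{ir}$ terms, so it is these, not the cross products, that threaten feasibility (your proposal misattributes that risk). Finally, the Nesterov route cannot be rescued as stated: $\arcsin(G)-G\succeq 0$ (odd series with nonnegative coefficients plus Schur products) is only useful when paired against a positive semidefinite coefficient matrix, and, as you correctly observe, the unique games objective supplies none.

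Against the paper's own proof, your route is genuinely different in mechanics but converges on the same two pointwise inequalities. The paper does not substitute the P3 optimum into P4; it interposes a problem P5 with the P3 objective and the arcsin constraint, argues P5 is a relaxation of P3 by rearranging $\arcsin(x)/x\le \pi/2$ into $\frac{2}{\pi}\arcsin(x)\le x$, and then, since P4 and P5 share a feasible set, bounds the P4 objective below by $\frac{2}{\pi}$ times the P5 objective using $\arcsin(x)/x\ge 1$, i.e.\ $\arcsin(x)\ge x$ termwise. Both rearrangements are valid only for $x\ge 0$ and reverse for $x<0$, and, as computed above, the constraint $\sum_r v_{i0}\cdot v_{ir}\le 2-k$ forces negative arguments whenever $k\ge 3$, while nothing constrains the signs of the cross products $v_{ir}\cdot v_{j\sigma_{ij}(r)}$. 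So the obstruction you flagged as ``the crux of the whole proof'' is precisely the step the paper's argument passes over without sign restrictions: your blind attempt does not yield a proof, but it correctly locates the load-bearing point, and the interposed relaxation P5 does not discharge it.
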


\begin{proof}
Consider the following constraint in problem P3:
$$\sum_{r=1}^k v_{i0}\cdot v_{ir}\le 2-k,\ {\rm for\ all}\ i\in V.$$
We can obtain a relaxed version of the constraint by replacing each term on the left hand side by another term that cannot attain a larger value than $v_{i0}\cdot v_{ir}$. We do so below. 

We substitute $x=v_{i0}\cdot v_{ir}$ in the relation $\arcsin (x)/x\le \pi/2$ and rearrange terms to obtain
$${2\over \pi}\arcsin (v_{i0}\cdot v_{ir})\le  v_{i0}\cdot v_{ir}.$$
This gives the following relaxation of the constraint in problem P3:
$${2\over \pi}\sum_{r=1}^k \arcsin (v_{i0}\cdot v_{ir})\le 2-k,\ {\rm for\ all}\ i\in V.$$
Thus  the following problem, denoted P5, is a relaxation of problem P3.

$${\rm (P5)}\ \ {\rm Maximize}\ \ z_2={1\over 4}\sum_{(i, j)\in E} \sum_{r=1}^k w_{ij} (v_{i0}\cdot v_{i0}+v_{i0}\cdot v_{ir}+v_{j0}\cdot v_{j\sigma_{ij}(r)}+v_{ir}\cdot v_{j\sigma_{ij}(r)})$$
$${\rm subject\ to}\ \ {2\over \pi}\sum_{r=1}^k \arcsin (v_{i0}\cdot v_{ir})\le 2-k,\ \ {\rm for\ all}\ i\in V,\qquad\quad$$
$$\quad v_{ir}\in S_{(k+1)n},\ {\rm for\ all}\ r=0, \dots, k,\  i\in V.$$

Let $z_2^*$ denote the optimal solution value for problem P5.  Then $z_2^*\ge z_1^*$, where $z_1^*$ is the optimal solution value for problem P3. 

Consider problem P4. It has the same constraints as problem P5. Since $\arcsin (x)/x \ge 1$, the value of the objective function in problem P4 is no smaller than $(2/\pi) z_2$. Thus, the optimal solution value of problem P4 has the lower bound
$$\max E[z]=z^*\ge {2\over \pi}z_2^*\ge {2\over \pi}z_1^*.$$ 

\end{proof}

Let $z^*=\max E[z]=(1-\epsilon)n$ denote the optimal solution value for a unique games problem. Since $z_1^*\ge z^*$, Theorem \ref{thm2} implies that we can establish the lower bound  $z^*\ge {2\over \pi}(1-\epsilon)n$ by solving problem P2 in polynomial time. Thus, we can  distinguish such a problem from another unique games problem with optimal solution value less  than ${2\over \pi}(1-\epsilon)n$.

\bibliography{ug_july28}

\end{document}